\documentclass[letterpaper, 10 pt, conference]{ieeeconf}  

\IEEEoverridecommandlockouts                              
\usepackage{amsfonts,amsmath,amssymb,booktabs,color,float,fullpage,graphicx}
\usepackage[mathscr]{eucal}
\usepackage{xfrac}
\usepackage{comment}
\usepackage{algorithm,algpseudocode}
\usepackage[nolabel]{showlabels}
\usepackage{hyperref}
\usepackage{cleveref}
\graphicspath{{./}{./Graphics/}}
\newtheorem{proposition}{Proposition}

\newtheorem{example}{Example}

\newcommand{\eig}{\mathfrak{e}}
\newcommand{\R}{\mathbb{R}}
\newcommand{\C}{\mathbb{C}}
\newcommand{\Koopman}{\mathcal{K}}
\newcommand{\data}[1]{\mathbf{x}(t_{#1})}
\newcommand{\interval}[1]{\Delta t_{#1}}
\newcommand{\eigvec}{\mathbf{a}}

\newcommand{\define}{\overset{\text{def}}{=}}
\newcommand{\matC}{\mathbf{C}(\alpha,\beta)}
\newcommand{\VF}{\mathsf{V}}
\newcommand{\Flow}{\Phi}

\newcommand{\sx}{\mathbf{x}}
\newcommand{\Loss}{\mathcal{L}}
\newcommand{\bfa}{\mathbf{a}}
\newcommand{\bfG}{\mathbf{G}}
\newcommand{\bfC}{\mathbf{C}}
\newcommand{\lb}{\left\{}
\newcommand{\rb}{\right\}}

\title{\LARGE \bf
{K}oopman Representations for Non-Vanishing Time Intervals: \\ An Optimization Approach and Sampling Effects}

\author{Younghwan Cho and Richard Sowers
\thanks{Younghwan Cho is with the Department of Industrial and Enterprise Systems Engineering, University of Illinois, Urbana, 61801, IL, United States
        {\tt\small yc71@illinois.edu}}%
\thanks{Richard Sowers is with the Department of Mathematics and the Department of Industrial and Enterprise Systems Engineering, University of Illinois, Urbana, 61801, IL, United States {\tt\small r-sowers@illinois.edu}}%
\thanks{This work has been submitted to the IEEE for possible publication. Copyright may be transferred without notice, after which this version may no longer be accessible.}
}
\begin{document}

\maketitle
\thispagestyle{empty}
\pagestyle{empty}

\begin{abstract}
Koopman operator theory is a key tool in data assimilation of complex dynamical systems, with the potential to be applied to multimodal data. We formulate the problem of learning Koopman eigenfunctions from observations at arbitrary, possibly non-vanishing, time intervals as an optimization problem. Analysis of the formulation reveals aliasing induced by oscillatory dynamics and the sampling pattern, making an inherent identifiability limit explicit. The analysis also uncovers phase alignment near the true Koopman frequency, which creates a steep loss valley and demands careful optimization. We further show that irregular sampling can break aliasing and lead to phase cancellation. Numerical results demonstrate the efficacy of the proposed method under large regular time intervals compared to generator extended dynamic mode decomposition, and support the idea that irregular sampling can help recover the true Koopman spectrum.
\end{abstract}

\section{Introduction}
The seminal idea of Koopman~\cite{koopman1931hamiltonian} maps flows of possibly nonlinear dynamics to a linear operator on a function space without loss of information; the tradeoff is that, generically, this space is infinite-dimensional. Under certain regularity conditions, the linear evolution of mapped flows admits a spectral decomposition for a broad family of dynamics~\cite{koopman1931hamiltonian}, motivating significant interest in applying Koopman decompositions to complex dynamical systems~\cite{mezic2013analysis,brunton2016extracting,kou2022data,avila2020data}. Since the Koopman operator is infinite-dimensional, finite-dimensional approximations are of great importance. However, an arbitrary finite-dimensional subspace is not necessarily Koopman-invariant. A lack of invariance can produce spurious spectral objects and only locally accurate predictions~\cite{mezic2020spectrum, bevanda2021koopman}. To address the aforementioned, recent studies~\cite{folkestad2020extended,korda2020optimal,kaiser2021data, deka2023path} work with subspaces spanned by Koopman eigenfunctions, which are Koopman-invariant by construction, rather than arbitrary dictionaries of functions~\cite{williams2015edmd, klus2020data}.


Our interest, broadly, is applying Koopman decompositions to \emph{multimodal} data. For example, sensor data of human gait~\cite{kaur2023deep} contains multiple gait events with temporal irregularity, e.g., left and right heel strike are not exactly periodic due to subjects' heterogeneity or the morbidity of neurodegenerative disorder~\cite{martin2006gait}. Loop detectors detect traffic at fixed but potentially irregular locations~\cite{fujito2006effect}.  As such, we have various interesting phenomena hidden in the complex data, and our particular interest is to learn Koopman eigenfunctions with observations of irregular time intervals. Several studies~\cite{korda2020optimal, cho2025koopman} address Koopman eigenfunction learning under irregular sampling. In \cite{korda2020optimal}, local Koopman eigenfunctions are constructed on non-recurrent sets, and these functions are extended to the rest of the state space via interpolation. On the other hand, \cite{cho2025koopman} searches for eigenfunctions directly over the entire space by minimizing the Koopman invariance residual, and its robustness has been demonstrated empirically under different sampling scenarios. 

In this paper, we propose an extension of \cite{cho2025koopman} to complex Koopman spectral objects. We formulate the problem of learning Koopman eigenfunction as an optimization task that operates directly on state observations at arbitrary times.
Analysis of the loss landscape reveals that (1) the formulation explicitly reflects aliasing induced by the sampling pattern and that (2) phase alignment near the true frequency creates a steep valley. The identifiability limit is not directly exposed in convex formulations such as EDMD~\cite{williams2015edmd}, which may nonetheless be affected by the same information loss. Third, we show analytically and numerically that irregular sampling can break aliasing and improve the identifiability of Koopman spectral objects. We expect that our formulation and its analysis provide general insights to optimization of Koopman-based methods.
\section{Preliminaries}
Koopman operator theory allows us to analyze continuous-time flows on some Euclidean space \(\R^D\).
With a Lipschitz continuous or continuously differentiable vector field $\VF:\R^D\to \R^D$, first define the flow map $\{\Flow_t\}_{t\in \R}$ as
the solution of 
\begin{equation}\label{equ:flow} \Phi_t(\sx)=\sx+\int_{s=0}^t \VF(\Flow_s(\sx))ds \qquad t\in \R. \end{equation} For arbitrary map (or \lq\lq observable") \(f:\R^D\to \C\), one can define the Koopman group \(\{\Koopman_t\}_{t\in \R}\) as follows:
\begin{equation*} \left(\Koopman_t f\right)(\sx) \define f(\Flow_t(\sx))=(f\circ \Flow_t)(\sx) \qquad \sx\in \R^D. \end{equation*}
The main result of Koopman \cite{koopman1931hamiltonian} was that in certain settings (restriction of \(\Koopman_t\) to $L^2(\R^D)$ and unitary {$\Koopman_t$}), we can construct spectral decompositions of $\{\Koopman_t\}_{t\in \R}$.  In other words, there exist pairs $(\eig,\lambda)$ of functions $\eig :\R^D \to \C$ (eigenfunctions) and $\lambda\in \C$ (eigenvalues) such that \( \left(\Koopman_t \eig\right)(\sx) = e^{\lambda t}\eig(\sx),\enspace \sx\in \R^D\) or, equivalently,
\begin{equation} \label{equ:almostKoopman} \eig(\Phi_t(\sx))=e^{\lambda t}\eig(\sx)\qquad \sx\in \R^D\end{equation}
The Koopman decomposition states that, under the correct hypothesis, the collection of eigenfunctions spans the collection of all square-integrable functions in a proper sense. 
The decomposition provides insights into nonlinear dynamics, including reduced order modeling~\cite{mezic2004comparison} and invariant and periodic structures of the state space \cite{budivsic2009approximate}.

\section{Problem formulation}
Suppose that we want to approximate a Koopman decomposition using data $\{\data{n}\}_{n=1}^N\subset \R^D$. If we would have a Koopman eigenpair $(\eig,\lambda)$, we would have
\begin{equation*} \eig(\data{n+1})=e^{\lambda (t_{n+1}-t_n)}\eig(\data{n}) \end{equation*}
for all $n\in \{1,2\dots N\}$, or, equivalently,
\begin{equation}\label{eq:basicloss} \frac{1}{N}\sum_{n=1}^N \left|\eig(\data{n+1})-e^{\lambda (t_{n+1}-t_n)}\eig(\data{n})\right|^2 =0. \end{equation}
More generally, if $\phi:\R^N\to \C$, define
\begin{equation*} \Loss(\phi,\lambda)\define \frac{1}{N}\sum_{n=1}^N\left|\phi(\data{n+1})-e^{\lambda (t_{n+1}-t_n)}\phi(\data{n})\right|_\C^2 \end{equation*}
(where $|\cdot|_\C$ is modulus in $\C$)
then $\Loss(\eig,\lambda)=0$ if $(\eig,\lambda)$ is a Koopman eigenpair.  Conversely, if $\Loss(\phi,\lambda)=0$ and there are \lq\lq enough" $t_n$'s, $(\phi,\lambda)$ should be \lq\lq close" to an eigenpair (we will not make this precise).

Numerically, we might fix a space $V$ of maps from $\R^D$ to $\C$ and minimize
\begin{equation*} \min_{\substack{\phi\in V \\ \lambda\in \C}}\Loss(\phi,\lambda). \end{equation*}
Given theoretical knowledge that there is a Koopman decomposition, this minimization problem should be close to zero if $V$ is \lq\lq rich" enough.

Fix some (linearly independent) collection functions $\{g_m\}_{m=1}^M$ , each of the $g_m$'s being a map from $\R^D$ to $\C$. Define $\bfG:\R^D\to \C^M$  as
\begin{equation*} \bfG(x)\define \begin{pmatrix} g_1(x) \\ g_2(x) \\ \vdots  \\ g_M(x)\end{pmatrix} \end{equation*}
for all $x\in \R^D$.  The span of $\{g_1,g_2\dots g_M\}$ is then
\begin{equation*} \lb \bfa^\dagger \bfG: \bfa\in \C^M \rb \end{equation*} where \(\dagger\) is a conjugate transpose.
We are then interested in the minimization problem
\begin{equation*} \min_{\substack{\bfa\in \C^M \\ \lambda\in \C}}\Loss(\bfa^\dagger\bfG,\lambda) \end{equation*}

Let us write out $\Loss(\bfa^\dagger\bfG,\lambda)$. Expanding the modulus squared and averaging over $n$, we obtain 
\begin{equation*} \Loss(\bfa^\dagger\bfG,\lambda)
= \bfa^\dagger \bfC(\lambda) \bfa \end{equation*}
where
\begin{equation} \label{eq:bfCDef}
\begin{aligned} \bfC(\lambda) &\define
\frac{1}{N}\sum_{n=1}^N \left(\bfG(\data{n+1})- e^{\lambda(t_{n+1}-t_n)}\bfG(\data{n})\right)\\
&\quad \qquad \times \left(\bfG(\data{n+1})-e^{\lambda(t_{n+1}-t_n)} \bfG(\data{n})\right)^\dagger. \end{aligned}\end{equation}

Write time intervals as \(\interval{n} = t_{n+1} - t_{n}\), and \(\lambda = \alpha + i\beta \in \C\).
We can now solve the optimization problem to find the approximation of \((\eig,\lambda)\) as follows:
\begin{equation}\label{eq:min_prob}
    \min_{\alpha,\beta \in \R,\, 
    \eigvec \in \mathbb{C}^M, \,
    \|\eigvec\|_2^2 = 1} \Loss\left(\alpha,\beta, \bfa \right).
\end{equation}
First, we deduce useful basic properties of \eqref{eq:min_prob}.  In \eqref{eq:min_prob}, one can show that \(\matC\) is Hermitian positive semi-definite. Moreover, \(\matC\) is analytic function of \(\alpha,\beta\), and the loss of \eqref{eq:min_prob} is symmetric w.r.t. \(\beta\). Given some \(\alpha,\beta\), the optimal solution of \eqref{eq:min_prob} is the smallest eigenvalue-eigenvector pair of \(\matC\). Since \(\matC\) is Hermitian positive semi-definite, the constrained minimum over \(\bfa\) is obtained at the eigenvector corresponding to the smallest eigenvalue of \(\matC\).

We have two key linear operators in \eqref{eq:min_prob}: the Koopman operator and the \(\matC\) depending on data and the Koopman eigenvalue, whose spectra determine the regularity of \eqref{eq:min_prob}. We avoid assumptions on the Koopman spectrum, as the underlying dynamics are often inaccessible.
Without further assumptions, \eqref{eq:min_prob} learns/approximates a point spectrum of an unknown Koopman operator. \(\mathbf{C}(\alpha, \beta)\) has a real and non-negative spectrum. We will assume that the spectrum of \(\matC\) is simple and support this with that the occurrence of non-simple Hermitian matrices is non-generic~\cite{tao2023topics}. Assuming the simple spectrum \(\matC\) gives us a better regularity on the eigenvalue map, i.e., a map from a Hermitian matrix to its smallest eigenvalue. 
This will enable us to design an optimization scheme based on the first-order information of the loss function. Denote by $\lambda_M: H(M) \rightarrow \C$ the smallest-eigenvalue map and by $u_M: H(M) \rightarrow \C^M$ the associated unit-eigenvector map on the space of $M$-dimensional Hermitian matrices. Then, we can write \eqref{eq:min_prob} as the following form:
\begin{equation}\label{eq:eigenproblem}
    \underset{\substack{
  \alpha,\beta \in \R}}{\min} \lambda_M\left( \matC\right).
\end{equation}
For brevity, we write \(\lambda_M(\alpha,\beta)\) for \(\lambda_M(\matC)\), and similarly for \(u_M\). We have the following result that can be directly applied to \eqref{eq:eigenproblem}.
\begin{proposition}\label{prop:simple-spectrum-gradient}
Assume $\matC\in\mathbb{C}^{M\times M}$ is \(C^1\) for all $(\alpha,\beta)$.
Suppose that at $(\alpha_0,\beta_0)$ the eigenvalue $\lambda_M(\alpha_0,\beta_0)$ is simple, i.e., $\mathbf{C}(\alpha_0,\beta_0)$ has no repeated eigenvalues. 
Let $u_M(\alpha_0,\beta_0)\in\mathbb{C}^M$ be a unit eigenvector associated with $\lambda_M(\alpha_0,\beta_0)$, i.e.
\begin{align*}
\mathbf{C}(\alpha_0,\beta_0)u_M(\alpha_0,\beta_0)&=\lambda_M(\alpha_0,\beta_0)\,u_M(\alpha_0,\beta_0), \\
u_M(\alpha_0,\beta_0)^\dagger u_M(\alpha_0,\beta_0)&=1.    
\end{align*}
Then $\lambda_M$ is differentiable at $(\alpha_0,\beta_0)$ and its gradient with respect to $(\alpha,\beta)$ satisfies the following:
\begin{equation} \label{eq:eigval-gradient}\begin{aligned}
&\begin{bmatrix}
\displaystyle \frac{\partial}{\partial\alpha}\lambda_M(\alpha_0,\beta_0)\\[6pt]
\displaystyle \frac{\partial}{\partial\beta}\lambda_M(\alpha_0,\beta_0)
\end{bmatrix}\\
&\qquad =
\begin{bmatrix}
u_M(\alpha_0,\beta_0)^\dagger \left(\frac{\partial}{\partial\alpha}\mathbf{C}(\alpha_0,\beta_0)\right)u_M(\alpha_0,\beta_0)\\[6pt]
u_M(\alpha_0,\beta_0)^\dagger \left(\frac{\partial}{\partial\beta}\mathbf{C}(\alpha_0,\beta_0)\right)u_M(\alpha_0,\beta_0)
\end{bmatrix}.
\end{aligned}
\end{equation}
\end{proposition}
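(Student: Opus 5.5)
The plan is to combine a Hellmann--Feynman computation with the regularity of simple eigenvalues of Hermitian matrices. \textbf{First}, I establish that $\lambda_M(\alpha,\beta)$ is differentiable near $(\alpha_0,\beta_0)$. Because $\mathbf{C}(\alpha_0,\beta_0)$ has simple spectrum, its eigenvalues $\mu_1<\dots<\mu_M$ are separated by a gap $\delta>0$. By Weyl's inequality, every eigenvalue of a Hermitian matrix moves by at most the operator-norm perturbation. Since $\matC$ is $C^1$, hence continuous, the smallest eigenvalue stays isolated, within $\delta/3$ of $\mu_1$, for $(\alpha,\beta)$ in a neighbourhood $U$.

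\textbf{Second}, I realize the smallest eigenvalue as a root of the characteristic polynomial and apply the implicit function theorem. Set $F(\alpha,\beta,\mu)=\det(\mu I-\matC)$. This is $C^1$ in $(\alpha,\beta)$ and polynomial in $\mu$. Simplicity of $\mu_1$ gives $\partial_\mu F\neq 0$ at $(\alpha_0,\beta_0,\mu_1)$. The implicit function theorem then produces a $C^1$ root $\mu(\alpha,\beta)$ near $(\alpha_0,\beta_0)$. By the first step this root must coincide with $\lambda_M(\alpha,\beta)$, so $\lambda_M$ is $C^1$ there.

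Alternatively, I can avoid differentiating eigenvectors altogether by a variational sandwich, and I may present this instead since it gives the formula directly. Write $u_0=u_M(\alpha_0,\beta_0)$. By the Rayleigh characterization,
\[
\lambda_M(\alpha,\beta)\le u_0^\dagger\matC u_0 = \lambda_M(\alpha_0,\beta_0)+ u_0^\dagger\big(\matC-\mathbf{C}(\alpha_0,\beta_0)\big)u_0 .
\]
This gives an upper bound whose first-order term is $u_0^\dagger(\partial\mathbf{C})u_0$.

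For the matching lower bound, I use continuity of the spectral projector $P(\alpha,\beta)$ onto the bottom eigenspace. This continuity follows from the Riesz integral $P=\frac{1}{2\pi i}\oint(z-\matC)^{-1}dz$ over a fixed circle around $\mu_1$ of radius $\delta/2$. Let $u(\alpha,\beta)$ be a unit eigenvector chosen with $u^\dagger u_0>0$; then $u\to u_0$. Applying the same Rayleigh inequality at the perturbed point, with $u_0$ as the test vector, gives
\[
\lambda_M(\alpha_0,\beta_0)\le \lambda_M(\alpha,\beta)+u^\dagger\big(\mathbf{C}(\alpha_0,\beta_0)-\matC\big)u .
\]
The difference $\matC-\mathbf{C}(\alpha_0,\beta_0)$ equals $\partial_\alpha\mathbf{C}\,h+\partial_\beta\mathbf{C}\,k+o(|h|+|k|)$ by the $C^1$ assumption. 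Since $u\to u_0$, the lower bound matches the upper bound to first order. This yields differentiability, with the gradient equal to \eqref{eq:eigval-gradient}.

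\textbf{Third}, the formula can also be cross-checked by Hellmann--Feynman once $u$ is known to be differentiable. Differentiate $\lambda_M=u^\dagger\mathbf{C}u$. The terms $(\partial u)^\dagger\mathbf{C}u+u^\dagger\mathbf{C}\partial u$ equal $\lambda_M\,\partial(u^\dagger u)=0$, using Hermitianity and $\mathbf{C}u=\lambda_M u$. Only $u^\dagger(\partial\mathbf{C})u$ survives.

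\textbf{Main obstacle.} The delicate step is the lower bound, or equivalently the continuity of the eigenvector, because eigenvectors are defined only up to a phase. I handle this by working with the projector $P$, which is phase-free, and fixing the phase via $u^\dagger u_0>0$. The gap from simplicity is exactly what makes $P$ continuous.
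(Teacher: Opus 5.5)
Your argument is correct, but its main line differs from the paper's. The paper cites one-parameter perturbation theory (Rellich, Tao) to get a differentiable eigenpair branch. It then differentiates $\mathbf{C}u_M=\lambda_M u_M$ and uses $u_M^\dagger u_M=1$ to cancel the eigenvector-derivative terms. That is exactly your third-step Hellmann--Feynman cross-check.

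Your variational sandwich never differentiates an eigenvector. The upper bound is the Rayleigh inequality at the perturbed point with test vector $u_0$. The lower bound is the Rayleigh inequality at the base point with test vector $u(\alpha,\beta)$; your sentence introducing it swaps these roles, although the displayed inequality is the right one. Continuity of the Riesz projector then matches the first-order terms.

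What your route buys is genuine two-variable (Fr\'echet) differentiability at $(\alpha_0,\beta_0)$ directly from the $C^1$ hypothesis. The one-parameter results, taken literally, give only partial derivatives. They also presuppose a smooth choice of eigenvector phase, which $u_M^\dagger u_M=1$ alone does not fix. The paper's route is shorter but rests entirely on the citation.

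Two simplifications are available to you. First, the phase normalization $u^\dagger u_0>0$ is unnecessary, since $u^\dagger A u=\mathrm{tr}(A\,uu^\dagger)$ depends only on the projector. Second, the sandwich and the implicit-function step each suffice on their own. Given differentiability from the IFT, the upper bound alone yields the formula: $u_0^\dagger\mathbf{C}(\alpha,\beta)u_0-\lambda_M(\alpha,\beta)\ge 0$ and it vanishes at $(\alpha_0,\beta_0)$, so its gradient there is zero.
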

\begin{proof}
Using the one-parameter perturbation result of the eigenvalue \((\alpha_0,\beta_0)\) and the associated eigenvector in \cite{tao2023topics} or \cite{rellich1969perturbation}, we take derivatives of \(\mathbf{C}(\alpha_0,\beta_0)u_M(\alpha_0,\beta_0)=\lambda_M(\alpha_0,\beta_0)\,u_M(\alpha_0,\beta_0)\) for both sides. The results follows by using \(u_M(\alpha_0,\beta_0)^\dagger u_M(\alpha_0,\beta_0)=1\) to organize the terms.
\end{proof}
\section{Loss landscape analysis}
Having established differentiability, we analyze the structure of the loss landscape from global to local properties to understand the optimization problem. 
By regular sampling, we refer to periodic sampling with a fixed, possibly non-vanishing sampling time interval. Our first observation is that the loss function of \eqref{eq:min_prob} exhibits a periodic structure in the \(\beta\)-axis under regular sampling.
\begin{proposition}\label{prop:periodicity}
Assume \( \interval{n} = \delta > 0, \enspace \forall n \in [N].\)
Fix a feasible coefficient vector \(\eigvec\), and define
\begin{align*}
\mathcal{L}(\alpha,\beta) &\define \Loss(\alpha,\beta,\bfa)\\
\phi_n &\define \eigvec^\dagger \bfG(\data{n}), \\
\psi_n &\define e^{\alpha\delta}\phi_n \bar{\phi}_{n+1}.    
\end{align*}
Then, for every fixed \(\alpha\), the map \(\beta \mapsto \mathcal{L}(\alpha,\beta)\) is periodic with period \(2\pi/\delta\):
\begin{equation*}
\mathcal{L}(\alpha,\beta+\sfrac{2\pi}{\delta})
=
\mathcal{L}(\alpha,\beta),
\qquad \forall \beta \in \R.    
\end{equation*}
\end{proposition}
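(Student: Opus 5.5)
The plan is to write the loss explicitly in terms of the scalars \(\phi_n\) and observe that \(\beta\) enters only through the factor \(e^{i\beta\delta}\). With \(\interval{n}=\delta\) for every \(n\) and a fixed feasible \(\eigvec\), the definition of \(\Loss\) gives
\begin{equation*}
\mathcal{L}(\alpha,\beta)=\frac{1}{N}\sum_{n=1}^N\left|\phi_{n+1}-e^{(\alpha+i\beta)\delta}\phi_n\right|^2 .
\end{equation*}
Equivalently, it is \(\eigvec^\dagger\matC\eigvec\) with \(\matC\) as in \eqref{eq:bfCDef}.

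Expanding each modulus squared gives
\begin{equation*}
\left|\phi_{n+1}\right|^2+e^{2\alpha\delta}\left|\phi_n\right|^2-2\,\mathrm{Re}\!\left(e^{i\beta\delta}\,e^{\alpha\delta}\phi_n\bar{\phi}_{n+1}\right),
\end{equation*}
and the last term is \(-2\,\mathrm{Re}(e^{i\beta\delta}\psi_n)\). Averaging over \(n\), we obtain
\begin{equation*}
\mathcal{L}(\alpha,\beta)=\frac{1}{N}\sum_{n=1}^N\Big(\left|\phi_{n+1}\right|^2+e^{2\alpha\delta}\left|\phi_n\right|^2\Big)-\frac{2}{N}\,\mathrm{Re}\Big(e^{i\beta\delta}\sum_{n=1}^N\psi_n\Big).
\end{equation*}
The first sum does not depend on \(\beta\). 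The second depends on \(\beta\) only through \(e^{i\beta\delta}\).

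Replacing \(\beta\) by \(\beta+2\pi/\delta\) multiplies \(e^{i\beta\delta}\) by \(e^{2\pi i}=1\), so \(\mathcal{L}(\alpha,\beta+2\pi/\delta)=\mathcal{L}(\alpha,\beta)\) for every \(\alpha\) and \(\beta\). The expansion is not strictly needed, since the factor \(e^{\lambda\delta}\) in the unexpanded sum is already invariant under this shift. I include it because it isolates the role of \(\psi_n\) introduced in the statement, which will be useful for the subsequent aliasing and phase-alignment discussion.

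There is no real obstacle. The only point needing care is that the constant spacing \(\interval{n}=\delta\) is what makes a single common period exist. With irregular \(\interval{n}\), each term would carry its own factor \(e^{i\beta\interval{n}}\), and periodicity would require the \(\interval{n}\) to be commensurate.
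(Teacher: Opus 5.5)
Your proof is correct and is the paper's argument: expand each squared modulus, collect the $\beta$-dependence into $-\frac{2}{N}\mathrm{Re}\big(e^{i\beta\delta}\sum_{n}\psi_n\big)$, and use $e^{2\pi i}=1$. Your remark that the unexpanded factor $e^{(\alpha+i\beta)\delta}$ is already invariant under $\beta\mapsto\beta+2\pi/\delta$ is also valid, and it gives an even shorter route.
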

\begin{proof}
By definition, the loss can be written as
\begin{equation*}
    \mathcal{L}(\alpha,\beta)=
    \frac{1}{N}\sum_{n=1}^N
    \left|
    \phi_{n+1}-e^{(\alpha+i\beta)\interval{n}}\phi_n
    \right|^2.
\end{equation*}
Expanding \eqref{eq:bfCDef}, we have
\vspace{-4pt}
\begin{align*} \bfC(\lambda)
&= \frac{1}{N}\sum_{n=1}^N \bfG(\data{n+1})\bfG^T(\data{n+1})\\
&\quad + \frac{1}{N}\sum_{n=1}^N e^{2\Re(\lambda)\interval{n}} \bfG(\data{n})\bfG^T(\data{n})\\
&\quad - \frac{2}{N}\sum_{n=1}^N \Re\lb e^{\lambda\interval{n}}\rb \\
&\quad \qquad \times \bfG(\data{n+1})\bfG^T(\data{n})
\end{align*}
We multiply \(\bfa^\dagger\) on the left and $\bfa$ on the right. Using the definition of \(\psi_n\), \(\phi_n\) and 
\(\interval{n}=\delta\), we have
\begin{align*}
    &\mathcal{L}(\alpha,\beta)\\
    &=
    \frac{1}{N}\sum_{n=1}^N
    \Big(
    |\phi_{n+1}|^2
    +
    |e^{(\alpha+i\beta)\delta}\phi_n|^2\\
    &-
    2\Re\!\big(
        \phi_{n+1}\overline{e^{(\alpha+i\beta)\delta}\phi_n}
    \big)
    \Big) \\
    &=
    \frac{1}{N}\sum_{n=1}^N
    \left(
    |\phi_{n+1}|^2
    +
    e^{2\alpha\delta}|\phi_n|^2
    -
    2\Re\!\left(
        e^{i\beta\delta} e^{\alpha\delta}\phi_n\bar{\phi}_{n+1}
    \right)
    \right) \\
    &=
    \frac{1}{N}\sum_{n=1}^N
    \left(
    |\phi_{n+1}|^2
    +
    e^{2\alpha\delta}|\phi_n|^2
    \right)
    -
    \frac{2}{N}\Re\!\left(
        \sum_{n=1}^N e^{i\beta\delta}\psi_n
    \right).
\end{align*}
Then, for any \(\beta \in \R\), it is clear that \(\mathcal{L}\!(\alpha,\beta+\sfrac{2\pi}{\delta}) = \mathcal{L}(\alpha,\beta)\). Therefore, \(\mathcal{L}(\alpha,\beta)\) is periodic in \(\beta\) with period \(2\pi/\delta\).
\end{proof}
The result shows that coarse sampling induces stronger aliasing in the frequency direction. More precisely, predicted frequencies separated by integer multiples of \(2\pi/\delta\) are indistinguishable to the loss. In contrast, finer sampling increases the period in the \(\beta\)-direction and makes the optimization landscape less affected by phase aliasing. This interpretation is consistent with \emph{the Nyquist--Shannon sampling theorem}. That is, for exact identification of a signal with a frequency band of \([-\beta_M,\beta_M]\), the sampling rate needs to satisfy \(\frac{2\pi}{\delta} > 2\beta_M\)~\cite{shannon2006communication}.  In this study, we do not assume a frequency band for the true dynamics, as we are machine-learning the system of interest. Hence, the smaller \(\delta\) is, the more amenable our problem is to numerical methods. In practice, expert knowledge can be used to propose a plausible bandwidth a priori to partially resolve the identification issue.

Using \Cref{prop:simple-spectrum-gradient}, the partial derivatives of 
\(\lambda_M\left(\alpha,\beta\right)\) are
\begin{equation}\label{eq:gradient}
\begin{split}
    \frac{\partial\lambda_M\left(\alpha,\beta\right)}{\partial \alpha} &= \frac{2}{N}\sum_{n=1}^N \interval{n}e^{\alpha\interval{n}}\Big( e^{\alpha\interval{n}} |\phi_n|^2 \\
    &- \Re{(e^{i\beta\interval{n}}\phi_n \bar{\phi}_{n+1})}  \Big). \\
    \frac{\partial\lambda_M\left(\alpha,\beta\right)}{\partial \beta} &= \frac{2}{N}\sum_{n=1}^N  \Im{(\interval{n}\psi_n e^{i\beta \interval{n}})}. 
\end{split}
\end{equation}
Expanding $\psi_n$ and writing each \(\phi_n\) in polar form,
$\phi_n = |\phi_n|\,e^{i\arg\phi_n}$, we obtain
\begin{equation*}
\begin{split}
&\frac{\partial\lambda_M(
\alpha,\beta)}{\partial \beta}\\
&= \frac{2}{N}\sum_{n=1}^{N}
   \Im\!\Big(\interval{n}\,|\psi_n|\,
   e^{i(\beta\interval{n} + \arg\phi_n - \arg\phi_{n+1})}\Big)\\[6pt]
&= \frac{2}{N}\sum_{n=1}^{N}
   \Im\!\Big(\interval{n}\,e^{(\alpha+\alpha^*)\interval{n}}\,
   |\phi_n|^{2}\,
   e^{i(\beta-\beta^*)\interval{n} + i2\pi k_n}\Big)\\[6pt]
&= \frac{2}{N}\sum_{n=1}^{N}
   \interval{n}\,e^{(\alpha+\alpha^*)\interval{n}}\,|\phi_n|^{2}\,
   \sin\!\left((\beta-\beta^*)\interval{n}\right),
\end{split}
\end{equation*}
where $\beta^*$ denotes the true Koopman frequency. Koopman invariance implies $\phi_{n+1} = e^{(\alpha^*+i\beta^*)\interval{n}}\phi_n$ and it follows that $\arg\phi_{n+1}= \arg\phi_n + \beta^*\interval{n} + 2\pi k_n$, where $k_n\in\mathbb{Z}$ is the unique integer such that $\arg\phi_{n+1}\in(-\pi,\pi]$. 

Let us denote the prediction error as $\Delta \beta = \beta-\beta^*$. When $\Delta \beta \approx0$ we have the approximation
\[
\sin\!\left(\Delta \beta\interval{n}\right) \approx \Delta \beta\interval{n},
\] so the summands become exponential and quadratic in \(\interval{n}\), and their phases align, resulting in a large slope of the loss function. Thus, locally, non-vanishing time intervals make the magnitude of the gradient large, requiring a careful step-size selection near an optimum. Also, the magnitude of the predicted frequency $\beta$ does not influence the local slope near the optimal solution, but is instead determined by the absolute error $\Delta \beta$. When the prediction error is large, the summand becomes highly oscillatory as a function of $\interval{n}$, leading to phase cancellation across samples. However, this cancellation does not occur under regular sampling.

Under \emph{non-vanishing regular sampling}, the loss landscape exhibits two distinct effects. Globally, the periodicity in \(\beta\) creates aliasing. The aliasing reflects a natural identifiability limit imposed by the sampling pattern. This ambiguity is not observed in the widely-adopted formulation of the EDMD family that solves a least-square problem. Since EDMD assumes a fixed time step, the same information loss is present, but potentially manifests as bias in the approximated operator rather than as periodicity in a loss landscape. Locally, phase alignment induces a steep valley near the true \(\beta\), making optimization sensitive to step-size selection. Irregular sampling also produces this local geometry. This is consistent with the observation from \cite{cho2025koopman} that a high precision line search method is effective in learning Koopman eigenvalues. 
    
\emph{Irregular sampling} have additional effects. First, temporal irregularity breaks the periodicity in \Cref{prop:periodicity} responsible for aliasing. This implies that we have better identifiability under irregular sampling times. This raises the possibility of learning high-frequency signals using a sampling rate lower than the one implied by the Shannon sampling theorem via irregular sampling, as in compressed sensing~\cite{candes2008introduction}.
Secondly, irregular sampling disrupts phase alignment in computing partial derivatives of \(\beta\) when \(\Delta \beta\) is non-vanishing. Hence, we would expect a gentler slope of loss landscape when our predictive error in \(\beta\) is large. A mixture of these effects will construct a loss landscape of the problem. We leave further analysis to future work.
\section{Numerical Results}
We test the proposed formulation under large time intervals and under high-frequency dynamics. The latter is designed to demonstrate the benefits of irregular sampling. In both experiments, we compare the spectrum learned from our new formulation with the one from gEDMD~\cite{klus2020data}. gEDMD approximates a Koopman semigroup generator via Galerkin method. We solve the problem~\eqref{eq:eigenproblem} with a hybrid search over \((\alpha,\beta)\) using a standard line search method and gradient descent with iterative eigen-decompositions, which is a modification of the numerical method from \cite{cho2025koopman}.
\subsection{Nonlinear system under coarse sampling}
Consider the nonlinear system from \cite{klus2020data}
\begin{align}
& \dot{x}_1(t)=\gamma x_1(t), \\
& \dot{x}_2(t)=\delta\left(x_2(t)-x_1^2(t)\right),
\end{align} where the eigenfunction capturing the nonlinear effect is $\eig(x_1(t),x_2(t))=\frac{2 \gamma-\delta}{\delta} x_2(t)+x_1^2(t)$ with associated eigenvalue $\lambda= \delta$. For the experiment, we set \( (\delta,\gamma) = (-0.7, -0.8)\) and use monomials up to order 4 without a constant term to reconstruct Koopman eigenfunctions. We generate 20 different trajectories from uniformly distributed initial states in \([-2,2] \times [-2,2]\). Then we sample data points from trajectories obtained from numerical integration with different time intervals \(\{ 0.05, 0.25, 0.5\}\). Initial guesses \((\alpha,\beta)\) are chosen from \([-3, 1] \times [-1,1]\). The learned spectrum is shown in \Cref{fig:spectrum_klus}. 
\begin{figure}[bt!]
    \centering
    \includegraphics[width=0.85\linewidth]{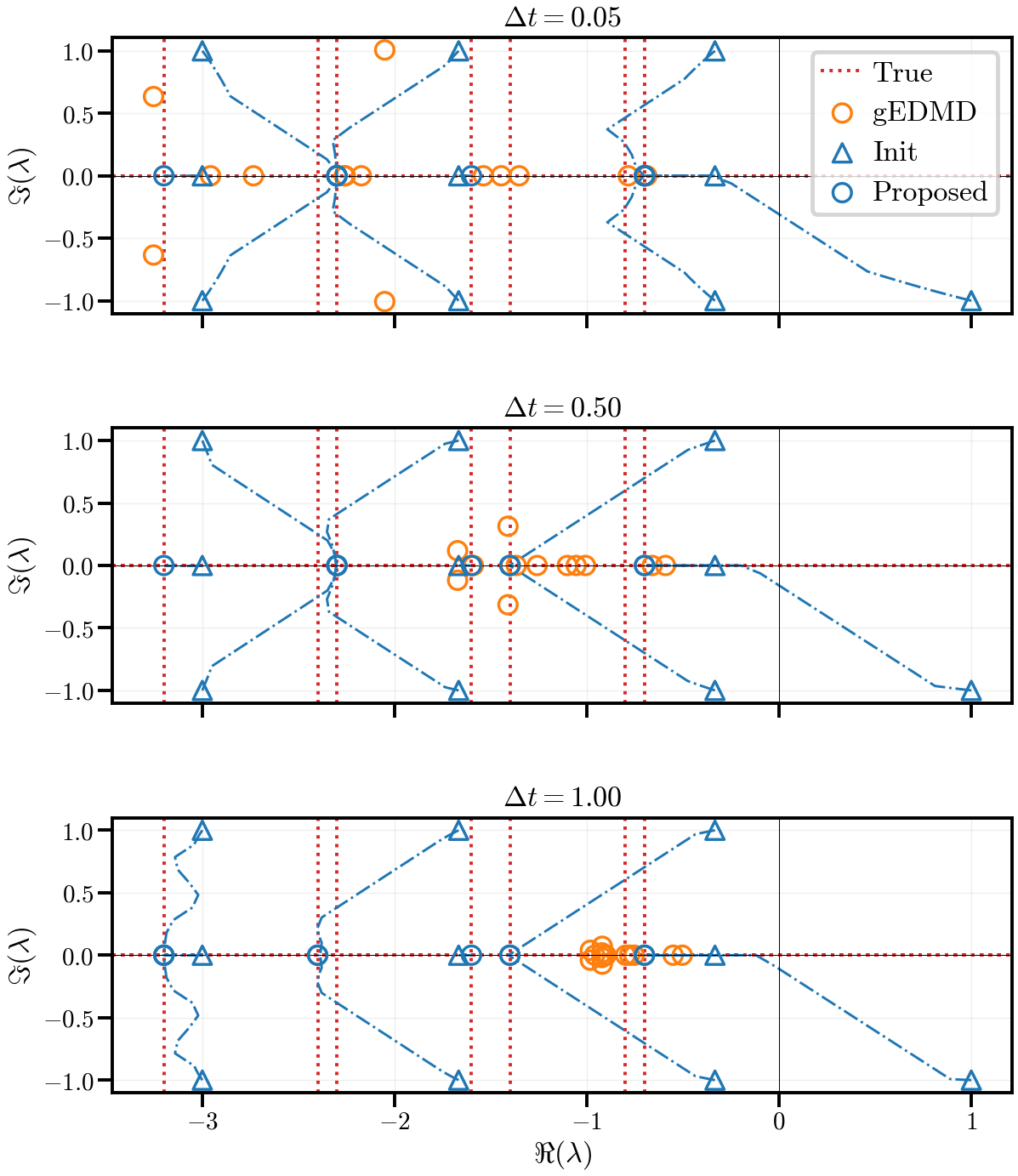}
    \caption{Spectrum obtained from gEDMD and the proposed method. Blue triangles denote the location of initial guesses of the Koopman eigenvalue. Dotted blue lines show trajectories of the eigenvalue update. Orange and blue circles denote the learned Koopman eigenvalue from gEDMD and the proposed formulation respectively. The red dotted lines are true eigenvalues.}
    \label{fig:spectrum_klus}
\end{figure}
Across all time intervals, the proposed method learns the true Koopman eigenvalues for all initial guesses, while gEDMD partially recovers the true eigenvalues when \(\interval{}=0.05\). gEDMD also exhibits spurious eigenvalues across all tested time intervals. All learned eigenvalues of the proposed method are combinations \(n \cdot \gamma + m \cdot \delta\) for some positive integers \(n, m\) (not both zero), and hence valid Koopman eigenvalues.
\addtolength{\textheight}{0cm}   

\subsection{High frequency dynamics and the effect of sampling}
We test the proposed formulation with the harmonic oscillator:
\begin{align}
    \dot{x}_1(t) &= -\omega x_2(t)\\
    \dot{x}_2(t) &= \omega x_1(t).
\end{align} The true Koopman eigenvalues are \(i\omega\) and \(-i\omega\), and the system has a conserved Hamiltonian \(0.5 (x_1^2(t) + x_2^2(t))\) at \(\lambda=0\).
We let \(\omega=50\) and set \(\interval{}=\{0.01, 0.2\}\). We test the formulation under both regular and irregular sampling: over a time horizon of length 4, a time interval of \(\delta = 0.2\) gives 20 observations while \(\delta = 0.01\) yields 400 observations. Then we randomly sample 20 observations from the 400 finely-sampled observations to test whether irregular sampling breaks aliasing. Initial Koopman eigenvalues \((\alpha,\beta)\) are chosen from \([-2, 2] \times [-60,60]\). 

The loss landscape of \eqref{eq:eigenproblem} in \(\beta\) is shown in \Cref{fig:landscape_harmonic}.
\begin{figure}[bt!]
    \centering
    \includegraphics[width=0.8\linewidth]{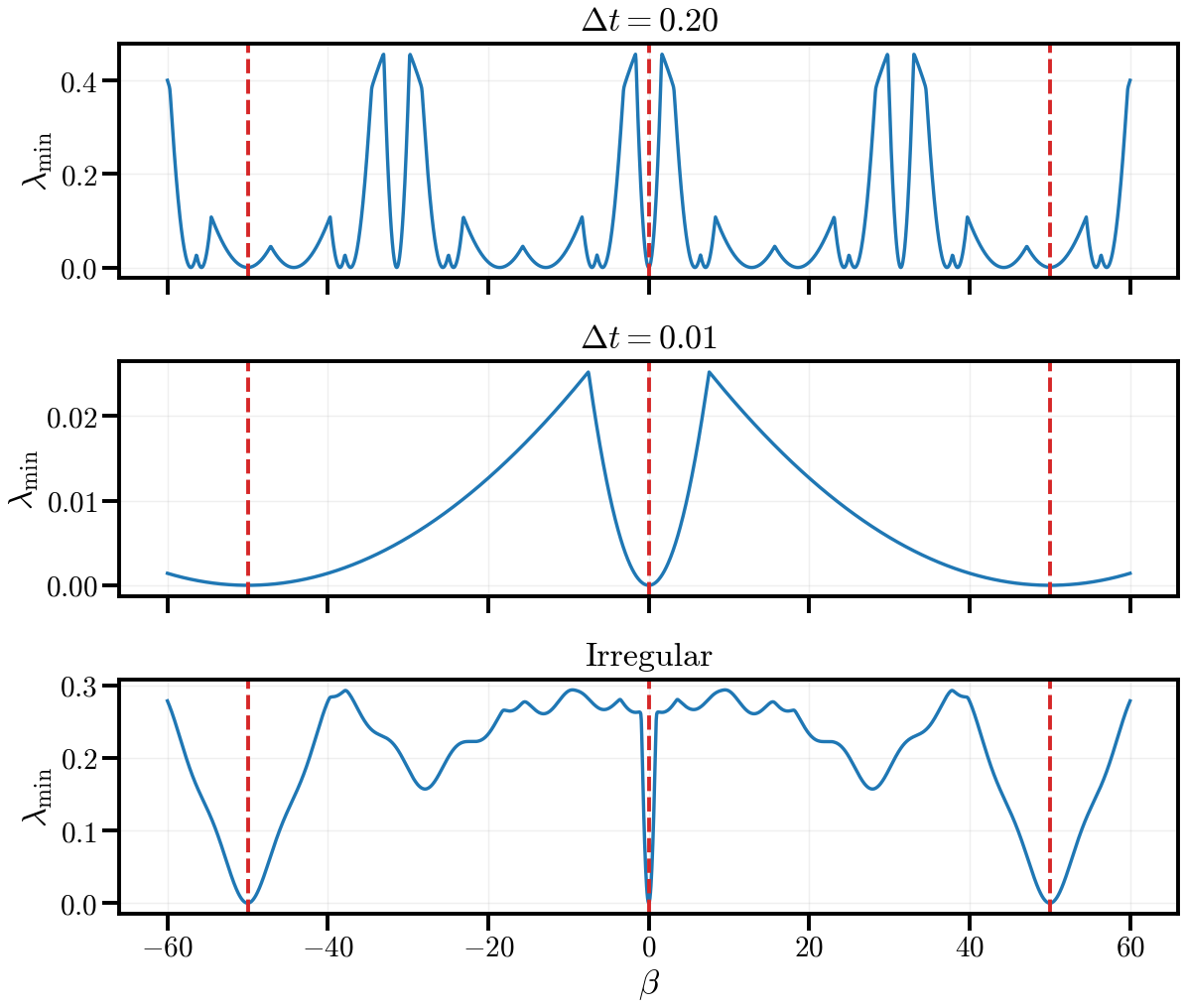}
    \caption{Loss landscape of \eqref{eq:eigenproblem}. The red vertical dotted lines show the location of true Koopman eigenvalues. The periodicity structure varies with the sampling pattern.}
    \label{fig:landscape_harmonic}
\end{figure}
For both cases of regular sampling, we observe periodicity and symmetry at \(\beta=0\). However, for irregular sampling, only symmetry is observed. Consistent with our analysis, irregularity makes true Koopman frequencies more identifiable for numerical methods using first-order information. The learned spectrum reflects these properties, as shown in \Cref{fig:spectrum_harmonic}. The proposed method is able to drive the real part of the eigenvalue to zero. However, due to the loss geometry, under regular sampling, the learned eigenvalues remain far from the true frequency, and the failure modes vary. As depicted in \(\interval{}=0.2\), the method can converge to a different aliased frequency. Also the method may halt because of plateaus near the true frequency as in \(\interval{}=0.01\); the gradient magnitude becomes too small. We observe that the failure modes can be avoided by relying solely on the line search method, though at additional computational cost. A detailed comparison is omitted due to space constraints. Under irregular sampling, the proposed method successfully learns the true Koopman eigenvalues, except for a few initial guesses. gEDMD captures the Hamiltonian structure but suffers from spurious eigenvalues even using small regular time intervals.

\begin{figure}[bt!]
    \centering
    \includegraphics[width=0.85\linewidth]{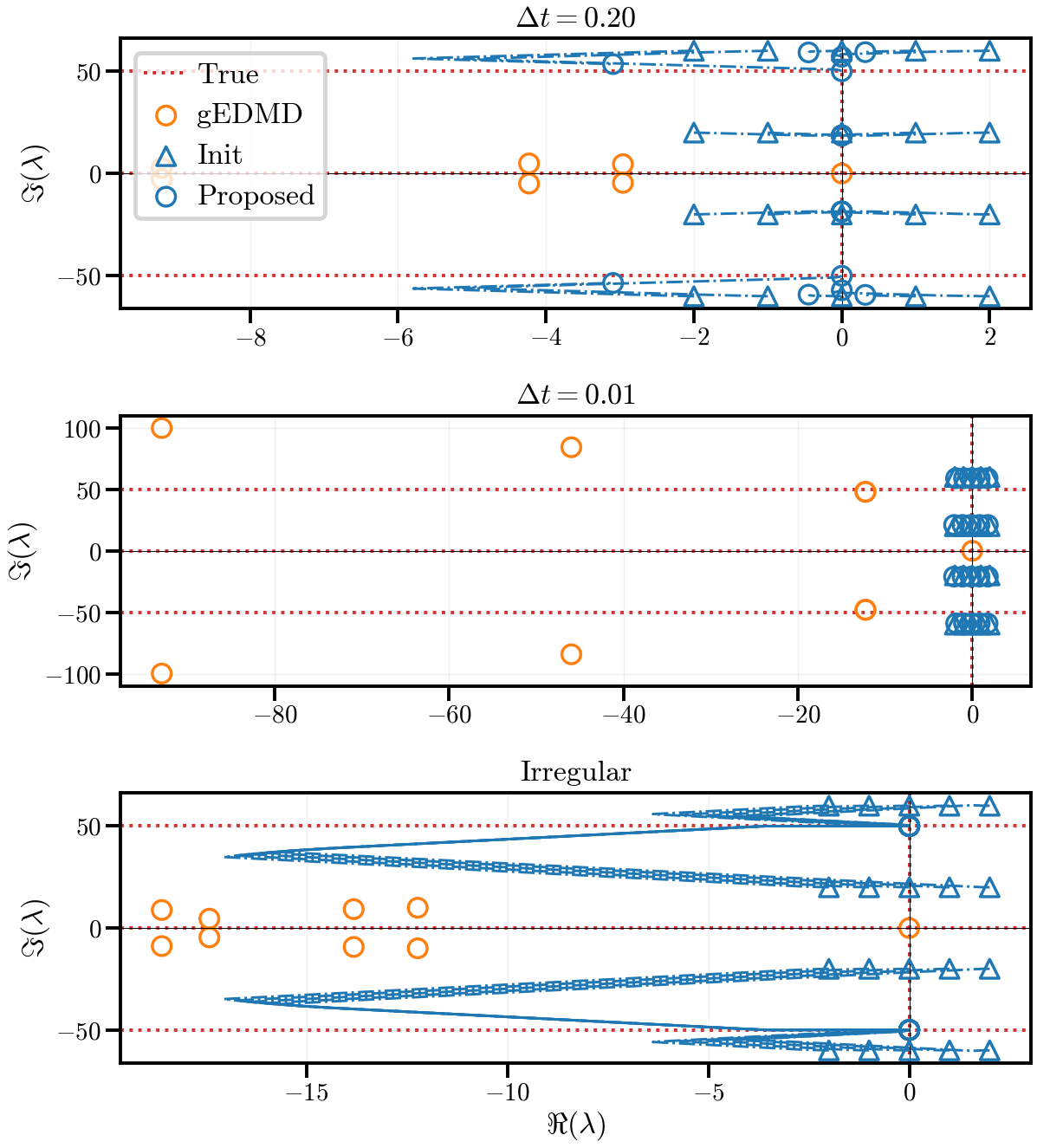}
    \caption{Spectrum obtained from gEDMD and the proposed method. The figure contains the same type of information as \Cref{fig:spectrum_klus}.}
    \label{fig:spectrum_harmonic}
\end{figure}

\section{Conclusions}
We formulate the problem of learning Koopman eigenfunctions from arbitrarily sampled observations as an optimization problem, extending \cite{cho2025koopman}. The analysis reveals aliasing that reflects the identifiability limit, and also phase alignment effects. We test the formulation on canonical systems and show that it is robust, with greater precision than gEDMD in reconstructing the Koopman spectrum. The numerical results also support the idea that irregular sampling can improve the effectiveness of numerical methods. We expect that our results can inform optimization of Koopman-based methods, and broaden the applicability of Koopman operator theory to complex data. 

\bibliography{IEEEabrv,reference} 

@article{koopman1931hamiltonian,
  title={Hamiltonian systems and transformation in {H}ilbert space},
  author={Koopman, Bernard O},
  journal={Proceedings of the National Academy of Sciences of the United States of America},
  volume={17},
  number={5},
  pages={315},
  year={1931},
  publisher={National Academy of Sciences},
  doi={10.1073/pnas.17.5.315},
  used={yes}
}

@article{brunton2016extracting,
  title={Extracting spatial--temporal coherent patterns in large-scale neural recordings using dynamic mode decomposition},
  author={Brunton, Bingni W and Johnson, Lise A and Ojemann, Jeffrey G and Kutz, J Nathan},
  journal={Journal of neuroscience methods},
  volume={258},
  pages={1--15},
  year={2016},
  publisher={Elsevier},
  doi={10.1016/j.jneumeth.2015.10.010},
    used={}
}

@article{mezic2013analysis,
  title={Analysis of fluid flows via spectral properties of the {K}oopman operator},
  author={Mezi{\'c}, Igor},
  journal={Annual Review of Fluid Mechanics},
  volume={45},
  pages={357--378},
  year={2013},
  publisher={Annual Reviews},
  doi={10.1146/annurev-fluid-011212-140652},
  used={}
}

@article{kou2022data,
  title={Data-driven eigensolution analysis based on a spatio-temporal {K}oopman decomposition, with applications to high-order methods},
  author={Kou, Jiaqing and Le Clainche, Soledad and Ferrer, Esteban},
  journal={Journal of Computational Physics},
  volume={449},
  pages={110798},
  year={2022},
  publisher={Elsevier},
  doi={10.1016/j.jcp.2021.110798},
  used={}
}

@article{avila2020data,
  title={Data-driven analysis and forecasting of highway traffic dynamics},
  author={Avila, AM and Mezi{\'c}, I},
  journal={Nature communications},
  volume={11},
  number={1},
  pages={1--16},
  year={2020},
  publisher={Nature Publishing Group},
  doi={10.1038/s41467-020-15582-5},
  used={}
}

@article{mezic2020spectrum,
  title={Spectrum of the {K}oopman operator, spectral expansions in functional spaces, and state-space geometry},
  author={Mezi{\'c}, Igor},
  journal={Journal of Nonlinear Science},
  volume={30},
  number={5},
  pages={2091--2145},
  year={2020},
  publisher={Springer},
  doi={https://doi.org/10.1007/s00332-019-09598-5}
}

@article{bevanda2021koopman,
  title={{K}oopman operator dynamical models: Learning, analysis and control},
  author={Bevanda, Petar and Sosnowski, Stefan and Hirche, Sandra},
  journal={Annual Reviews in Control},
  volume={52},
  pages={197--212},
  year={2021},
  publisher={Elsevier},
  doi={10.1016/j.arcontrol.2021.09.002},
}

@inproceedings{folkestad2020extended,
  title={Extended dynamic mode decomposition with learned {K}oopman eigenfunctions for prediction and control},
  author={Folkestad, Carl and Pastor, Daniel and Mezic, Igor and Mohr, Ryan and Fonoberova, Maria and Burdick, Joel},
  booktitle={2020 american control conference (acc)},
  pages={3906--3913},
  year={2020},
  organization={IEEE},
  doi={10.23919/ACC45564.2020.9147729},
}

@article{korda2020optimal,
  title={Optimal construction of {K}oopman eigenfunctions for prediction and control},
  author={Korda, Milan and Mezi{\'c}, Igor},
  journal={IEEE Transactions on Automatic Control},
  volume={65},
  number={12},
  pages={5114--5129},
  year={2020},
  publisher={IEEE},
  doi={10.1109/TAC.2020.2978039},
}

@article{kaiser2021data,
  title={Data-driven discovery of {K}oopman eigenfunctions for control},
  author={Kaiser, Eurika and Kutz, J Nathan and Brunton, Steven L},
  journal={Machine Learning: Science and Technology},
  volume={2},
  number={3},
  pages={035023},
  year={2021},
  publisher={IOP Publishing},
  doi={10.1088/2632-2153/abf0f5},
}

@inproceedings{deka2023path,
  title={Path-integral formula for computing {K}oopman eigenfunctions},
  author={Deka, Shankar A and Narayanan, Sriram SKS and Vaidya, Umesh},
  booktitle={2023 62nd IEEE Conference on Decision and Control (CDC)},
  pages={6641--6646},
  year={2023},
  organization={IEEE},
  doi={10.1109/CDC49753.2023.10384288}
}

@article{cho2025koopman,
  title={{K}oopman representations with irregular time intervals},
  author={Cho, Younghwan and Sowers, Richard},
  journal={Physica D: Nonlinear Phenomena},
  pages={135062},
  year={2025},
  publisher={Elsevier},
  doi={https://doi.org/10.1016/j.physd.2025.135062}
}

@article{williams2015edmd,
  title={A data--driven approximation of the koopman operator: Extending dynamic mode decomposition},
  author={Williams, Matthew O and Kevrekidis, Ioannis G and Rowley, Clarence W},
  journal={Journal of Nonlinear Science},
  volume={25},
  pages={1307--1346},
  year={2015},
  publisher={Springer},
  doi={10.1007/s00332-015-9258-5},
}

@article{klus2020data,
  title={Data-driven approximation of the {K}oopman generator: Model reduction, system identification, and control},
  author={Klus, Stefan and N{\"u}ske, Feliks and Peitz, Sebastian and Niemann, Jan-Hendrik and Clementi, Cecilia and Sch{\"u}tte, Christof},
  journal={Physica D: Nonlinear Phenomena},
  volume={406},
  pages={132416},
  year={2020},
  publisher={Elsevier},
  doi={10.1016/j.physd.2020.132416}
}

@article{kaur2023deep,
  title={Deep learning for multiple sclerosis differentiation using multi-stride dynamics in gait},
  author={Kaur, Rachneet and Levy, Joshua and Motl, Robert W and Sowers, Richard and Hernandez, Manuel E},
  journal={IEEE Transactions on Biomedical Engineering},
  volume={70},
  number={7},
  pages={2181--2192},
  year={2023},
  publisher={IEEE},
  doi={10.1109/TBME.2023.3238680}
}

@article{martin2006gait,
  title={Gait and balance impairment in early multiple sclerosis in the absence of clinical disability},
  author={Martin, Clarissa L and Phillips, Beverley A and Kilpatrick, TJ and Butzkueven, Helmut and Tubridy, Niall and McDonald, Elizabeth and Galea, MP},
  journal={Multiple Sclerosis Journal},
  volume={12},
  number={5},
  pages={620--628},
  year={2006},
  publisher={Sage Publications Sage CA: Thousand Oaks, CA},
  doi={https://doi.org/10.1177/13524585060706}
}

@article{fujito2006effect,
  title={Effect of sensor spacing on performance measure calculations},
  author={Fujito, Iris and Margiotta, Rich and Huang, Weimin and Perez, William A},
  journal={Transportation research record},
  volume={1945},
  number={1},
  pages={1--11},
  year={2006},
  publisher={SAGE Publications Sage CA: Los Angeles, CA},
  doi={https://doi.org/10.1177/036119810619450010}
}

@article{mezic2004comparison,
  title={Comparison of systems with complex behavior},
  author={Mezi{\'c}, Igor and Banaszuk, Andrzej},
  journal={Physica D: Nonlinear Phenomena},
  volume={197},
  number={1-2},
  pages={101--133},
  year={2004},
  publisher={Elsevier},
  doi={10.1016/j.physd.2004.06.015},
}

@inproceedings{budivsic2009approximate,
  title={An approximate parametrization of the ergodic partition using time averaged observables},
  author={Budi{\v{s}}i{\'c}, Marko and Mezi{\'c}, Igor},
  booktitle={Proceedings of the 48h IEEE Conference on Decision and Control (CDC) held jointly with 2009 28th Chinese Control Conference},
  pages={3162--3168},
  year={2009},
  organization={IEEE},
  doi={10.1109/CDC.2009.5400512},
}

@book{tao2023topics,
  title={Topics in random matrix theory},
  author={Tao, Terence},
  volume={132},
  year={2023},
  publisher={American Mathematical Society}
}

@book{rellich1969perturbation,
  title={Perturbation theory of eigenvalue problems},
  author={Rellich, Franz},
  year={1969},
  publisher={CRC Press}
}

@article{shannon2006communication,
  title={Communication in the presence of noise},
  author={Shannon, Claude E},
  journal={Proceedings of the IRE},
  volume={37},
  number={1},
  pages={10--21},
  year={2006},
  publisher={IEEE}
}

@article{candes2008introduction,
  title={An introduction to compressive sampling},
  author={Cand{\`e}s, Emmanuel J and Wakin, Michael B},
  journal={IEEE signal processing magazine},
  volume={25},
  number={2},
  pages={21--30},
  year={2008},
  publisher={IEEE},
  doi={10.1109/MSP.2007.914731}
}






\end{document}